\documentclass[11pt,reqno]{amsart}

\usepackage[margin=1.15in]{geometry}
\usepackage{amsmath,amssymb,mathtools}
\usepackage{enumitem}
\usepackage{hyperref}
\usepackage[nameinlink,capitalize]{cleveref}

\hypersetup{
  colorlinks=true,
  linkcolor=black,
  citecolor=blue,
  urlcolor=blue
}

\newcommand{\ii}{\mathrm{i}}
\newcommand{\dd}{\mathrm{d}}

\newcommand{\cR}{\mathcal{R}}

\newcommand{\floor}[1]{\left\lfloor #1\right\rfloor}

\numberwithin{equation}{section}

\newtheorem{theorem}{Theorem}[section]
\newtheorem{proposition}[theorem]{Proposition}

\newtheorem{remark}[theorem]{Remark}

\title[Morse momentum wavefunctions and rational functions]
{The Morse potential as a physical realization of finite biorthogonal rational functions}

\author{Luc Vinet}
\address{Centre de recherches math\'ematiques, Universit\'e de Montr\'eal, P.O. Box 6128, Centre-ville Station, Montr\'eal, Qu\'ebec, H3C 3J7, Canada}
\email{luc.vinet@umontreal.ca}

\author{Alexei Zhedanov}
\address{Leonhard Euler International Mathematical Institute, Saint Petersburg, Russian Federation}

\begin{document}

\begin{abstract}
We revisit the bound states of the Morse potential in the momentum representation.
After the ground-state factor is extracted, the remaining factors are finite rational
functions of the momentum variable.  These functions are eigenfunctions of a
second-order difference operator and are identified with the symmetric specialization
of a finite family of biorthogonal rational functions introduced by Koepf and
Masjed-Jamei.  They also satisfy a generalized eigenvalue problem in the degree
variable, thereby placing the Morse momentum wavefunctions within the framework
of rational bispectrality and $R_{II}$-type systems.  Finally, after extraction of
their poles, the same wavefunctions are expressed in terms of Meixner--Pollaczek
polynomials with degree-dependent parameters.  This gives a simple description of
their zeros.  The Morse potential thus provides a concrete quantum-mechanical
realization of finite biorthogonal rational functions.
\end{abstract}

\maketitle

\section{Introduction}

Orthogonal polynomials arise naturally throughout exactly solvable quantum
mechanics.  The Hermite, Laguerre and Jacobi polynomials appear respectively in
the harmonic oscillator, Coulomb and P\"oschl--Teller systems; more generally,
hypergeometric orthogonal polynomials are ubiquitous in one-dimensional solvable
models and in superintegrable systems \cite{Chihara1978,Ismail2005,KoekoekLeskySwarttouw2010}.
In the discrete setting, finite Jacobi matrices, spin chains, discrete quantum
oscillators and quantum walks provide equally important realizations of
orthogonal polynomials.  In particular, Krawtchouk, Hahn, dual Hahn, Racah and
$q$-Racah polynomials occur in finite spin chains with perfect state transfer,
in discrete oscillator models and in the algebraic description of finite quantum
systems \cite{Godsil2012,VinetZhedanov2012a,VinetZhedanov2012b,GenestPostVinet2017,GenestVinetZhedanov2014}.

In contrast, explicit physical realizations of orthogonal or biorthogonal
rational functions are much less common.  Such functions arise naturally in
approximation theory, continued fractions, rational interpolation and generalized
eigenvalue problems (GEVP)\cite{IsmailMasson1994,Zhedanov1999, Rosengren, Rosengren2}.  They have also appeared
in the theory of Wilson functions and in recent algebraic treatments of
bispectral rational functions \cite{Groenevelt2003,TsujimotoVinetZhedanov2020,BussiereGaboriaudVinetZhedanov2022}.
They have also appeared as solutions of the GEVPs \cite{tsujimoto2023rational, tsujimoto2026ruijsenaars, tsujimoto2026rational} associated to the rational Heun operators identified as two-particle degenerations introduced by Takemura \cite{takemura2017degenerations} of the Ruijsenaars - Van Diejen Hamiltonians \cite{ruijsenaars2004integrable, van1994integrability}. Nevertheless, familiar exactly solvable quantum systems whose eigenfunctions are
naturally expressed in terms of finite rational functions remain rare.

The purpose of the present paper is to show that the bound-state wavefunctions of
the Morse potential in the momentum representation provide such an example.  In
coordinate space, the Morse potential is one of the classical exactly solvable
one-dimensional potentials and its bound states are expressed through confluent
hypergeometric functions.  In the momentum representation, however, the
Schr\"odinger equation becomes a second-order difference equation.  We shall see
that, after the ground-state momentum wavefunction is factored out, the remaining
factors are rational functions of the momentum variable.

The momentum-space Morse oscillator was studied in \cite{DahlSpringborg1988} and
an explicit hypergeometric expression for the bound states was obtained in
\cite{SunDong2012}.  Our emphasis here is different.  We show that the rational
factors in the momentum wavefunctions coincide with the symmetric specialization
of a finite family of biorthogonal rational functions introduced by Koepf and
Masjed-Jamei \cite{KoepfMasjedJamei2007}.  This identifies the Morse potential as
a physical realization of a finite rational system of $R_{II}$ type.

A second point is the bispectral nature of these functions.  They satisfy a
difference equation in the momentum variable and, dually, a generalized eigenvalue
problem in the degree variable.  This places the Morse momentum wavefunctions in
the setting of rational bispectrality, where the role played by ordinary
three-term recurrence relations for orthogonal polynomials is taken over by
tridiagonal matrix pencils.  In this sense, the role played here by finite rational
functions is analogous to the role played by Hermite, Laguerre, Jacobi,
Krawtchouk, Hahn and Racah polynomials in continuous and discrete exactly
solvable models.

Finally, by applying a classical transformation of the Gauss hypergeometric
function, we express the pole-free part of the Morse momentum wavefunctions in
terms of Meixner--Pollaczek polynomials.  This gives an immediate description of
the zeros: for the $n$th bound state, all $n$ zeros are simple and lie on the
horizontal line $\operatorname{Im} p=-1/2$ in the complex momentum plane.

The paper is organized as follows.  In \cref{sec:morse} we recall the Morse
potential and derive the finite-difference equation in momentum space.  In
\cref{sec:rational} we factor out the ground state and construct the rational
eigenfunctions.  In \cref{sec:kmj} we identify these functions with the
Koepf--Masjed-Jamei finite rational functions and derive their orthogonality.  In
\cref{sec:gevp} we formulate the corresponding generalized eigenvalue problem and
explain the resulting bispectrality.  In \cref{sec:mp} we relate the same
wavefunctions to Meixner--Pollaczek polynomials and determine their zeros.

\section{The Morse potential in momentum space}\label{sec:morse}

We consider the Morse potential
\begin{equation}
  U(x)=g^2\left(e^{-2x}-2e^{-x}\right),
  \qquad g>0.
  \label{eq:morse_potential}
\end{equation}
It tends to zero as $x\to\infty$ and has its minimum $U_{\min}=-g^2$ at $x=0$.
The Schr\"odinger equation for bound states is
\begin{equation}
  H\psi_n=E_n\psi_n,
  \label{eq:schrodinger_abstract}
\end{equation}
where
\begin{equation}
  H=p^2+U(x),
  \qquad p=-\ii\partial_x .
  \label{eq:hamiltonian}
\end{equation}
In coordinate representation this gives
\begin{equation}
  -\psi_n''(x)+U(x)\psi_n(x)=E_n\psi_n(x).
  \label{eq:coordinate_schrodinger}
\end{equation}
The bound-state energies are
\begin{equation}
  E_n=-\left(g-n-\frac12\right)^2,
  \qquad n=0,1,\dots,N,
  \qquad N=\floor{g-\frac12}.
  \label{eq:energies}
\end{equation}
The corresponding coordinate wavefunctions can be written as
\begin{equation}
  \psi_n(x)=G_n e^{-y/2} y^{g-n-1/2}
  {}_1F_1\!\left(
  \begin{matrix}-n\\ 2g-2n\end{matrix};y
  \right),
  \qquad y=2g e^{-x},
  \label{eq:coordinate_wavefunctions}
\end{equation}
where $G_n$ is a normalization constant.

The momentum representation is defined by the Fourier transform
\begin{equation}
  \Psi(p)=\frac{1}{\sqrt{2\pi}}
  \int_{-\infty}^{\infty} e^{-\ii px}\psi(x)\,\dd x.
  \label{eq:fourier}
\end{equation}
In momentum space the coordinate operator becomes
\begin{equation}
  x=\ii\partial_p.
\end{equation}
Consequently the exponentials in the Morse potential act as shift operators,
\begin{equation}
  e^{-x}=T,
  \qquad e^{-2x}=T^2,
  \label{eq:shift_exp}
\end{equation}
where
\begin{equation}
  T^jF(p)=F(p-j\ii),
  \qquad j=0,\pm1,\pm2,\dots .
  \label{eq:shift_def}
\end{equation}
Thus the Hamiltonian becomes
\begin{equation}
  H=p^2+g^2(T^2-2T),
  \label{eq:momentum_hamiltonian}
\end{equation}
and the bound-state equation reads
\begin{equation}
  p^2\Psi_n(p)+g^2\bigl(\Psi_n(p-2\ii)-2\Psi_n(p-\ii)\bigr)
  =E_n\Psi_n(p).
  \label{eq:momentum_difference_eq}
\end{equation}
The explicit expression obtained in \cite{SunDong2012} is
\begin{equation}
  \Psi_n(p)=K_n\Gamma\!\left(g-n-\frac12+\ii p\right)g^{-\ii p}
  {}_2F_1\!\left(
  \begin{matrix}
  -n,\; g-n-\frac12+\ii p\\
  2g-2n
  \end{matrix};2
  \right),
  \label{eq:sun_dong}
\end{equation}
where $K_n$ is independent of $p$.

\section{Rational eigenfunctions}\label{sec:rational}

The ground-state momentum wavefunction is, up to normalization,
\begin{equation}
  \Psi_0(p)=g^{-\ii p}\Gamma\!\left(g-\frac12+\ii p\right).
  \label{eq:ground_state}
\end{equation}
It is a solution of \eqref{eq:momentum_difference_eq} with energy
\begin{equation}
  E_0=-\left(g-\frac12\right)^2.
\end{equation}
We write the excited states in the form
\begin{equation}
  \Psi_n(p)=\Psi_0(p)R_n(p).
  \label{eq:factorization}
\end{equation}
Substitution in \eqref{eq:momentum_difference_eq} gives the eigenvalue equation
\begin{equation}
  W R_n(p)=E_n R_n(p),
  \label{eq:W_eigenvalue}
\end{equation}
where the difference operator $W$ is
\begin{equation}
  W=p^2I+\bigl((g+\ii p)^2-\frac14\bigr)T^2
  -2g\left(g-\frac12+\ii p\right)T.
  \label{eq:W_operator}
\end{equation}
In particular,
\begin{equation}
  W\{1\}=E_0.
\end{equation}

Let
\begin{equation}
  \chi_k(p)=\frac{1}{\left(-\ii p-g+\frac32\right)_k},
  \qquad k=0,1,2,\dots,
  \label{eq:chi_basis}
\end{equation}
where
\begin{equation}
  (a)_k=a(a+1)\cdots(a+k-1),
  \qquad (a)_0=1.
\end{equation}
For each $n$, the functions $\chi_0,\chi_1,\dots,\chi_n$ span the finite-dimensional space
\begin{equation}
  \cR_n=\operatorname{span}\{\chi_0(p),\chi_1(p),\dots,\chi_n(p)\}.
\end{equation}
Equivalently, elements of $\cR_n$ are rational functions of the form
\begin{equation}
  \frac{Q_n(p)}{
  \left(-\ii p-g+\frac32\right)
  \left(-\ii p-g+\frac52\right)
  \cdots
  \left(-\ii p-g+\frac{2n+1}{2}\right)},
  \label{eq:rational_flag}
\end{equation}
where $Q_n(p)$ is a polynomial of degree at most $n$.

\begin{proposition}\label{prop:bidiagonal}
The operator $W$ acts on the basis \eqref{eq:chi_basis} as
\begin{equation}
  W\chi_k(p)=
  -\left(g-k-\frac12\right)^2\chi_k(p)+2k\chi_{k-1}(p),
  \qquad k\ge 0,
  \label{eq:bidiagonal_action}
\end{equation}
with the convention $\chi_{-1}=0$.
\end{proposition}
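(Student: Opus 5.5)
The plan is a direct computation. I will rewrite everything in terms of the single variable
\[
z=-\ii p-g+\tfrac32 ,
\]
so that $\chi_k=1/(z)_k$. Since $T$ sends $p\mapsto p-\ii$, it sends $-\ii p\mapsto -\ii p-1$, that is, $z\mapsto z-1$. Hence
\[
T\chi_k=\frac{1}{(z-1)_k},\qquad T^2\chi_k=\frac{1}{(z-2)_k}.
\]
Next I express the coefficients of $W$ in \eqref{eq:W_operator} through $z$. From $\ii p=-z-g+\tfrac32$ we get $g+\ii p=\tfrac32-z$. This gives
\[
(g+\ii p)^2-\tfrac14=(z-1)(z-2),\qquad g-\tfrac12+\ii p=1-z,\qquad p^2=-(z+g-\tfrac32)^2 .
\]

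The key simplification is that the coefficients cancel the extra Pochhammer factors produced by the shifts. For this I use
\[
(z-2)_k=(z-2)(z-1)(z)_{k-2},\qquad (z-1)_k=(z-1)(z)_{k-1},
\]
read as identities of rational functions. This is valid for all $k\ge0$ with the usual extension $(z)_{-m}=1/\bigl((z-m)\cdots(z-1)\bigr)$, so $k=0,1$ need no separate treatment. Putting everything over the common denominator $(z)_k$ gives
\[
W\chi_k=\frac{-(z+g-\tfrac32)^2+(z+k-2)(z+k-1)+2g(z+k-1)}{(z)_k}.
\]

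The numerator is a priori quadratic in $z$, but the $z^2$ terms cancel. The linear coefficient is $-2g+3+2k-3+2g=2k$. On the other side, $\chi_{k-1}=(z+k-1)/(z)_k$, so the right-hand side of \eqref{eq:bidiagonal_action} has numerator
\[
-(g-k-\tfrac12)^2+2k(z+k-1).
\]
Its linear coefficient is also $2k$. It therefore remains only to compare constant terms. Expanding both gives $-g^2+g+2gk+k^2-3k-\tfrac14$, so they agree.

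For $k=0$ the coefficient $2k$ vanishes, which is consistent with the convention $\chi_{-1}=0$. In that case the identity reduces to $W\{1\}=E_0$, already noted above.

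The only real risk is bookkeeping: keeping the direction of the shift $T$ and the sign conventions for $\ii p$ straight. Checking the $z^2$ cancellation and the $k=0$ case serves as a consistency test on those conventions.
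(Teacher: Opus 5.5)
Your proposal is correct and takes the same route as the paper: substitute $T\chi_k=1/(z-1)_k$ and $T^2\chi_k=1/(z-2)_k$ into $W$ and reduce the result to the pair $\chi_k,\chi_{k-1}$. You carry out explicitly the bookkeeping the paper leaves implicit, namely the factorizations $(z-2)_k=(z-2)(z-1)(z)_{k-2}$ and $(z-1)_k=(z-1)(z)_{k-1}$ and the comparison of coefficients, and your arithmetic checks out.
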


\begin{proof}
This follows by direct substitution.  Under the shift $T$, the variable $\ii p$ is replaced by
$\ii p+1$, and hence
\begin{equation}
  T\chi_k(p)=\frac{1}{\left(-\ii p-g+\frac12\right)_k},
  \qquad
  T^2\chi_k(p)=\frac{1}{\left(-\ii p-g-\frac12\right)_k}.
\end{equation}
Substituting these expressions in \eqref{eq:W_operator} and reducing the resulting rational
function to the basis $\{\chi_k,\chi_{k-1}\}$ gives \eqref{eq:bidiagonal_action}.
\end{proof}

It follows that, for each $n=0,1,\dots,N$, there is an eigenfunction of $W$ in $\cR_n$ with
eigenvalue $E_n$.  Write
\begin{equation}
  R_n(p)=\sum_{k=0}^{n} A_{n,k}\chi_k(p).
  \label{eq:R_expansion}
\end{equation}
Using \eqref{eq:bidiagonal_action} in \eqref{eq:W_eigenvalue} gives the recurrence
\begin{equation}
  A_{n,k+1}
  =\frac{(k-n)(k+n+1-2g)}{2(k+1)}A_{n,k},
  \qquad k=0,1,\dots,n-1.
  \label{eq:A_recurrence}
\end{equation}
Choosing $A_{n,0}=1$, one obtains
\begin{equation}
  A_{n,k}=\frac{(-n)_k(n+1-2g)_k}{k!}\left(\frac12\right)^k.
  \label{eq:A_coefficients}
\end{equation}
Hence
\begin{equation}
  R_n(p)=
  {}_2F_1\!\left(
  \begin{matrix}
  -n,\; n+1-2g\\
  -\ii p-g+\frac32
  \end{matrix};\frac12
  \right).
  \label{eq:R_hypergeometric}
\end{equation}
Thus
\begin{equation}
  \Psi_n(p)=\kappa_n\Psi_0(p)R_n(p)
  \label{eq:Psi_R_solution}
\end{equation}
provides the bound-state solutions of \eqref{eq:momentum_difference_eq}, with energies
\eqref{eq:energies}.

\section{Identification with finite biorthogonal rational functions}\label{sec:kmj}

The rational functions \eqref{eq:R_hypergeometric} are naturally related to the finite family of
rational functions introduced by Koepf and Masjed-Jamei \cite{KoepfMasjedJamei2007}.  In the
notation used here, that family is
\begin{equation}
  B_n(x;a,b)=
  {}_2F_1\!\left(
  \begin{matrix}
  -n,\; n-a-b\\
  1-a-x
  \end{matrix};\frac12
  \right).
  \label{eq:KMJ_functions}
\end{equation}
The functions $B_n$ satisfy a finite biorthogonality relation of the form
\begin{equation}
  \int_{-\infty}^{\infty}
  \Gamma(a+\ii x)\Gamma(b-\ii x)
  B_n(\ii x;a,b)B_m(-\ii x;b,a)\,\dd x=0,
  \qquad n\ne m,
  \label{eq:KMJ_biorthogonality}
\end{equation}
for the admissible finite range of indices.  In the symmetric specialization $a=b$, this
biorthogonality becomes an ordinary orthogonality relation.

Comparing \eqref{eq:R_hypergeometric} and \eqref{eq:KMJ_functions}, one obtains
\begin{equation}
  x=\ii p,
  \qquad a=b=g-\frac12.
  \label{eq:parameter_identification}
\end{equation}
Equivalently,
\begin{equation}
  R_n(p)=B_n\!\left(\ii p;g-\frac12,g-\frac12\right).
  \label{eq:R_B_identification}
\end{equation}

\begin{theorem}\label{thm:main_identification}
The rational factors of the momentum-space bound states of the Morse potential coincide with the
symmetric Koepf--Masjed-Jamei finite rational functions:
\begin{equation}
  R_n(p)=B_n\!\left(\ii p;g-\frac12,g-\frac12\right),
  \qquad n=0,1,\dots,N.
\end{equation}
Consequently, the Morse momentum wavefunctions provide a physical realization of a finite family
of orthogonal rational functions belonging to the $R_{II}$ framework.
\end{theorem}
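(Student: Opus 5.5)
The identity is essentially a parameter match, so the plan is to make that match explicit and then check that the ``consequently'' part is justified. First, I take the expansion \eqref{eq:R_hypergeometric}, which was obtained from \cref{prop:bidiagonal} via the recurrence \eqref{eq:A_recurrence} and the closed form \eqref{eq:A_coefficients}. In \eqref{eq:KMJ_functions} I substitute $x=\ii p$ and $a=b=g-\tfrac12$. Then the numerator parameter becomes $n-a-b=n+1-2g$. The denominator parameter becomes $1-a-x=-\ii p-g+\tfrac32$, and the argument stays $\tfrac12$. So the two terminating series agree term by term. Since both series terminate at $k=n$ because of the factor $(-n)_k$, this is an identity of rational functions in $p$ and no convergence issue arises. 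I will also note that the denominators $(-\ii p-g+\tfrac32)_k$ coincide with the basis $\chi_k$, so the identity simply says $B_n(\ii p;a,a)=\sum_k A_{n,k}\chi_k(p)$ in $\cR_n$.

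Second, I must make sure the identification is consistent with the normalization and the range $n\le N$. We chose $A_{n,0}=1$ and the bidiagonal structure is triangular, so the eigenfunction of $W$ in $\cR_n$ with eigenvalue $E_n$ is unique up to scaling. The key point is that the diagonal entries $-(g-k-\tfrac12)^2$ are pairwise distinct for $0\le k\le n\le N$; this holds whenever $2g$ is not an integer in the relevant range, and otherwise needs a brief genericity or continuity remark. Hence $R_n$ is exactly $B_n$, not merely proportional to it. It follows that $\Psi_n=\kappa_n\Psi_0 B_n(\ii p;g-\tfrac12,g-\tfrac12)$.

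Third, for the ``consequently'' clause I will specialize \eqref{eq:KMJ_biorthogonality} to $a=b=g-\tfrac12$. The weight becomes $\Gamma(g-\tfrac12+\ii x)\Gamma(g-\tfrac12-\ii x)=|\Gamma(g-\tfrac12+\ii x)|^2$ for real $x$, which equals $|\Psi_0(x)|^2$ because $|g^{-\ii x}|=1$. The second factor $B_m(-\ii x;a,a)$ is the complex conjugate of $B_m(\ii x;a,a)$, since the coefficients are real. The relation therefore reads $\int \Psi_n\overline{\Psi_m}\,\dd p=0$, which is the expected orthogonality of eigenstates of the self-adjoint $H$. This gives an independent consistency check.

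The main obstacle I expect is the admissible range. Integrability of $|\Gamma(a+\ii x)|^2|B_n|^2$ requires the poles of $B_n$, at $\ii p=\tfrac32-g+j$ for $j=0,\dots,n-1$, to stay off the real axis. It also requires $\Gamma$ decay to beat the growth of $B_nB_m$. Both conditions amount to $n<g-\tfrac12$, which is exactly $n\le N$. I will verify this using Stirling's formula: $|\Gamma(a+\ii x)|^2\sim e^{-\pi|x|}|x|^{2a-1}$, while $B_n$ is bounded at infinity. Finally, I will invoke the $R_{II}$ framework only through the tridiagonal pencil structure developed in \cref{sec:gevp}, so here the claim is the orthogonality.
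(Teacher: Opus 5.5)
Your first step is exactly the paper's argument for the identity: a termwise comparison of \eqref{eq:R_hypergeometric} with \eqref{eq:KMJ_functions} under $x=\ii p$, $a=b=g-\tfrac12$. That part is fine. Your second step is superfluous, because $R_n$ is \emph{defined} by \eqref{eq:R_hypergeometric} with $A_{n,0}=1$. Its hedge is also unnecessary. Two diagonal entries $-(g-j-\tfrac12)^2$ and $-(g-k-\tfrac12)^2$ with $j<k\le N$ can only coincide if $j+k=2g-1$, whereas $j+k\le 2N-1<2g-1$.

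The genuine problem is how you handle the admissible range in the ``consequently'' clause. The paper argues in the opposite direction. It starts from the orthonormality of the physical bound states (eigenvectors of the self-adjoint $H$ with distinct energies, transported by the unitary Fourier transform). It inserts $\Psi_n=\kappa_n\Psi_0R_n$ to get \eqref{eq:R_orthogonality_physical}, and only then recognizes the $a=b$ case of \eqref{eq:KMJ_biorthogonality} through $\overline{R_m(p)}=R_m(-p)$. The range $0\le n,m\le N$ therefore comes for free.

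If you instead deduce orthogonality from \eqref{eq:KMJ_biorthogonality}, you need its actual range hypothesis, and integrability is not it. Your claim that the two integrability conditions ``amount to $n<g-\tfrac12$'' is false. The poles of $R_n$ are at $p=\ii(g-\tfrac32-j)$, $0\le j\le n-1$, and reach the real axis only if $g-\tfrac32\in\{0,\dots,n-1\}$; moreover $R_n\to1$ at infinity. So for $g-\tfrac12\notin\mathbb{Z}$ the integral converges for every $n,m$, including $n>N$, where orthogonality genuinely fails. For instance, take $g=0.8$, so $N=0$. Then $R_1(p)=(0.5-\ii p)/(0.7-\ii p)$ and the integral converges. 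A Barnes-type residue computation gives $\int_{-\infty}^{\infty}|\Gamma(0.3+\ii p)|^2R_1(p)\,\dd p=\pi\Gamma(0.6)\neq0$, so $R_1$ is not orthogonal to $R_0=1$.

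The relevant condition is that the poles of $R_n$ lie in the \emph{upper} half-plane, on the same side as those of $\Gamma(g-\tfrac12+\ii p)$. By \eqref{eq:R_V_factorization}, $\Psi_0R_n$ is a constant multiple of $g^{-\ii p}\Gamma(g-n-\tfrac12+\ii p)V_n(p)$. The condition $g-n-\tfrac12>0$ is exactly what makes this the Fourier transform of a normalizable coordinate eigenfunction, and that is what any Parseval- or contour-type proof of the orthogonality uses. So either quote the Koepf--Masjed-Jamei hypothesis itself, or promote your ``consistency check'' to the actual argument, as the paper does. Note also that $n<g-\tfrac12$ is not equivalent to $n\le\floor{g-\tfrac12}$ when $g-\tfrac12\in\mathbb{Z}$; in that case $n=N$ is the non-normalizable zero-energy threshold state.
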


The physical orthogonality of the Morse bound states gives the corresponding orthogonality
relation for these rational functions.  With the normalization chosen so that the wavefunctions are
orthonormal,
\begin{equation}
  \int_{-\infty}^{\infty}
  \overline{\Psi_m(p)}\Psi_n(p)\,\dd p=\delta_{mn}.
\end{equation}
Using \eqref{eq:factorization}, this becomes
\begin{equation}
  \int_{-\infty}^{\infty}
  w(p)R_n(p)\overline{R_m(p)}\,\dd p=h_n\delta_{mn},
  \label{eq:R_orthogonality_physical}
\end{equation}
where
\begin{equation}
  w(p)=|\Psi_0(p)|^2
  =\left|\Gamma\!\left(g-\frac12+\ii p\right)\right|^2.
  \label{eq:weight}
\end{equation}
Here the factor $g^{-\ii p}$ has modulus one for real $p$ and $g>0$.  Since
$\overline{R_m(p)}=R_m(-p)$ for real $p$, \eqref{eq:R_orthogonality_physical} is the symmetric
specialization of the Koepf--Masjed-Jamei biorthogonality.

\begin{remark}
The terminology $R_{II}$ is often attached to rational functions whose recurrence relations are
encoded by tridiagonal generalized eigenvalue problems.  In its standard form, an $R_{II}$
recurrence contains a quadratic factor multiplying the preceding element.  Equivalently, it can be
written as a tridiagonal matrix pencil.  The Koepf--Masjed-Jamei functions form a finite
hypergeometric realization of this rational framework; the Morse model realizes the symmetric
orthogonal specialization of that family.
\end{remark}

\section{Generalized eigenvalue problem and bispectrality}\label{sec:gevp}

The rational functions $R_n(p)$ satisfy two eigenvalue problems.  The first is the difference
equation \eqref{eq:W_eigenvalue}, where the operator $W$ acts on the momentum variable $p$ and
the eigenvalue is the energy $E_n$.

The second is a generalized eigenvalue problem in the degree variable.  This is the rational
analogue of the three-term recurrence relation for orthogonal polynomials.  In the theory of
biorthogonal rational functions, such recurrences are naturally written as matrix pencils
\begin{equation}
  J_1 R=zJ_2 R,
  \label{eq:abstract_GEVP}
\end{equation}
where $J_1$ and $J_2$ are tridiagonal matrices \cite{Zhedanov1999}.  For the Morse rational
functions, the spectral parameter is $z=\ii p$ and the dual problem is
\begin{equation}
  J_1R(p)=\ii p\,J_2R(p),
  \label{eq:morse_GEVP}
\end{equation}
where
\begin{equation}
  R(p)=\bigl(R_0(p),R_1(p),\dots,R_N(p)\bigr)^T.
\end{equation}
In components, for the interior values of the finite system,
\begin{align}
  &\eta_{1,n}R_{n+1}(p)+\eta_{2,n}R_{n-1}(p)+\eta_{0,n}R_n(p)
  \notag\\
  &\qquad
  =-\ii p\left(
  \xi_{1,n}R_{n+1}(p)+\xi_{2,n}R_{n-1}(p)+\xi_{0,n}R_n(p)
  \right).
  \label{eq:component_GEVP}
\end{align}
The coefficients are obtained from contiguous relations for the Gauss hypergeometric function:
\begin{align}
  \eta_{2,n}&=n-g-\frac12,
  &
  \eta_{1,n}&=
  \frac{(g-n)(2g-n-1)(g-n-\frac32)}{n(g-n-1)},
  &
  \eta_{0,n}&=
  \frac{g(g-n-\frac12)}{n(g-n-1)},
  \label{eq:eta_coefficients}
\end{align}
while
\begin{align}
  \xi_{2,n}&=1,
  &
  \xi_{1,n}&=
  \frac{(g-n)(2g-n-1)}{n(g-n-1)},
  &
  \xi_{0,n}&=-
  \frac{g(2g-2n-1)}{n(g-n-1)}.
  \label{eq:xi_coefficients}
\end{align}
Boundary terms are obtained by truncating the finite system at $n=0$ and $n=N$.

Equations \eqref{eq:W_eigenvalue} and \eqref{eq:morse_GEVP} exhibit the bispectral character of
the Morse rational functions.  The difference operator $W$ acts in the momentum variable and has
spectrum $E_n$, whereas the generalized eigenvalue problem acts in the degree variable and has
spectral parameter $\ii p$.  This is the rational counterpart of the bispectrality of classical
orthogonal polynomials, with tridiagonal matrix pencils replacing ordinary three-term recurrence
relations.

\section{Relation with Meixner--Pollaczek polynomials and zeros}\label{sec:mp}

There is another useful representation of the momentum wavefunctions.  We use the transformation
formula
\begin{equation}
  {}_2F_1\!\left(
  \begin{matrix}
  -n,\;\beta\\
  \gamma
  \end{matrix};z
  \right)
  =\frac{(\beta)_n}{(\gamma)_n}(-z)^n
  {}_2F_1\!\left(
  \begin{matrix}
  -n,\;1-\gamma-n\\
  1-\beta-n
  \end{matrix};z^{-1}
  \right).
  \label{eq:hypergeometric_transform}
\end{equation}
Applying this transformation to \eqref{eq:R_hypergeometric} gives
\begin{equation}
  R_n(p)=
  \frac{(-1/2)^n(n+1-2g)_n}{(3/2-g-\ii p)_n}
  V_n(p),
  \label{eq:R_V_factorization}
\end{equation}
where
\begin{equation}
  V_n(p)=
  {}_2F_1\!\left(
  \begin{matrix}
  -n,\; g-n+\ii p-\frac12\\
  2g-2n
  \end{matrix};2
  \right).
  \label{eq:V_def}
\end{equation}
The function $V_n(p)$ is a polynomial in $p$ of degree $n$.  In this form all poles of
$R_n(p)$ are contained in the factor $(3/2-g-\ii p)_n^{-1}$, and \eqref{eq:V_def} agrees with the
hypergeometric expression for the momentum wavefunction displayed in \eqref{eq:sun_dong}.

The Meixner--Pollaczek polynomials are defined, up to a normalization factor independent of the
argument, by
\begin{equation}
  P_n(t;\mu,\phi)=
  {}_2F_1\!\left(
  \begin{matrix}
  -n,\;\mu+\ii t\\
  2\mu
  \end{matrix};1-e^{-2\ii\phi}
  \right).
  \label{eq:MP_def}
\end{equation}
Comparison of \eqref{eq:V_def} with \eqref{eq:MP_def} gives
\begin{equation}
  \phi=\frac{\pi}{2},
  \qquad \mu=g-n,
  \qquad t=p+\frac{\ii}{2}.
  \label{eq:MP_parameters}
\end{equation}
Therefore
\begin{equation}
  V_n(p)=P_n\!\left(p+\frac{\ii}{2};g-n,\frac{\pi}{2}\right),
  \label{eq:V_MP}
\end{equation}
up to the normalization convention used for the Meixner--Pollaczek polynomials.

For $\mu>0$, the Meixner--Pollaczek polynomials are positive definite and have $n$ simple real
zeros.  In the present case $\mu=g-n$, and for bound states
$n=0,1,\dots,\floor{g-1/2}$ one has $g-n>0$.
Consequently, the zeros of $V_n(p)$ are obtained from the real zeros of a positive definite
Meixner--Pollaczek polynomial by the shift $t=p+\ii/2$.

\begin{theorem}\label{thm:zeros}
For each bound state $n=0,1,\dots,N$, the momentum-space wavefunction $\Psi_n(p)$ has exactly
$n$ simple zeros.  They lie on the horizontal line
\begin{equation}
  \operatorname{Im}p=-\frac12
\end{equation}
in the complex $p$-plane.  In particular, $\Psi_n(p)$ has no zeros on the real momentum axis.
\end{theorem}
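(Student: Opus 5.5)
The plan is to show that, up to a nowhere-vanishing factor, $\Psi_n$ is the polynomial $V_n$, and then to read off the zeros of $V_n$ from the Meixner--Pollaczek identification \eqref{eq:V_MP}.

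\emph{Step 1: cancel the poles.} Combine \eqref{eq:Psi_R_solution}, \eqref{eq:ground_state} and \eqref{eq:R_V_factorization} to write
\begin{equation*}
\Psi_n(p)=\kappa_n\Bigl(-\tfrac12\Bigr)^n(n+1-2g)_n\,g^{-\ii p}\,
\frac{\Gamma\!\left(g-\frac12+\ii p\right)}{\left(\frac32-g-\ii p\right)_n}\,V_n(p).
\end{equation*}
Set $z=g-\frac12+\ii p$. Each factor of the Pochhammer symbol satisfies $\frac32-g-\ii p+j=-(z-1-j)$ for $j=0,\dots,n-1$. Hence $(\frac32-g-\ii p)_n=(-1)^n(z-1)(z-2)\cdots(z-n)$. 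The functional equation of $\Gamma$ then gives $\Gamma(z)/[(z-1)\cdots(z-n)]=\Gamma(z-n)$. Therefore
\begin{equation*}
\Psi_n(p)=c_n\,g^{-\ii p}\,\Gamma\!\left(g-n-\tfrac12+\ii p\right)V_n(p),
\end{equation*}
which is consistent with \eqref{eq:sun_dong}.

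\emph{Step 2: the prefactor has no zeros.} The constant $c_n$ is nonzero. Indeed, $(n+1-2g)_n$ vanishes only if $2g\in\{n+1,\dots,2n\}$, which is excluded because $n\le N$ forces $2g\ge 2n+1$. The factor $g^{-\ii p}$ never vanishes, and $\Gamma$ has no zeros. So the zeros of $\Psi_n$ are exactly those zeros of $V_n$ that are not cancelled by poles of $\Gamma(g-n-\frac12+\ii p)$. Those poles lie at $p=\ii(g-n-\frac12+j)$ with $j\ge0$, which is on the closed upper half-plane since $g-n-\frac12\ge0$. They are therefore disjoint from the line $\operatorname{Im}p=-\frac12$. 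Once Step 3 shows that all zeros of $V_n$ lie on that line, no cancellation can occur.

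\emph{Step 3: zeros of $V_n$.} Using \eqref{eq:V_MP} with $\phi=\pi/2$, $\mu=g-n>0$ and $t=p+\ii/2$, check that $1-e^{-\ii\pi}=2$ and $\mu+\ii t=g-n-\frac12+\ii p$. This matches \eqref{eq:V_def} exactly.

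The degree is exactly $n$. The $k=n$ term of the terminating series contributes the top power $(\ii t)^n$ with coefficient $(-1)^n2^n/(2g-2n)_n$, and this is nonzero because $2g-2n\ge1$.

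For $\mu>0$ and $0<\phi<\pi$, the polynomials $P_m(t;\mu,\phi)$ are orthogonal on $\mathbb{R}$ with respect to the positive weight $e^{(2\phi-\pi)t}|\Gamma(\mu+\ii t)|^2$. Here that weight is $|\Gamma(g-n+\ii t)|^2$. The standard argument then applies: if $P_n$ had fewer than $n$ real zeros of odd multiplicity, then multiplying by the product over those zeros would give a polynomial of lower degree that is not orthogonal to $P_n$, a contradiction. Hence all $n$ zeros in $t$ are real and simple, and via $t=p+\ii/2$ they lie on $\operatorname{Im}p=-\frac12$. In particular none of them is real.

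\emph{Main obstacle.} The expected difficulty is bookkeeping rather than conceptual. Two points need care. First, $\mu=g-n$ depends on $n$, so the orthogonality family must be taken at fixed $\mu$ with varying degree, not along the Morse sequence; one must check that $P_n$ at this fixed $\mu$ is the $n$th member of a positive-definite family. Second, the edge case $g-\frac12\in\mathbb{Z}$ with $n=N$ gives $g-n-\frac12=0$. There the Gamma factor has a pole at $p=0$, but this is a pole, not a zero, and it is off the line $\operatorname{Im}p=-\frac12$, so the zero count is unaffected.
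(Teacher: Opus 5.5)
Your proposal is correct and follows the same route as the paper's proof: strip off the nonvanishing prefactor, identify $V_n$ with the Meixner--Pollaczek polynomial $P_n(p+\ii/2;g-n,\pi/2)$, and use positive-definite orthogonality at fixed $\mu=g-n>0$ to get $n$ real simple zeros in $t$. You also fill in details the paper skips: you recombine the prefactor into $\Gamma(g-n-\frac12+\ii p)$ and check that its poles lie on $\operatorname{Im}p\ge 0$, so they cannot cancel zeros on $\operatorname{Im}p=-\frac12$, and you verify that $c_n\neq 0$ and $\deg V_n=n$.
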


\begin{proof}
The ground-state factor \eqref{eq:ground_state} has no zeros, and the factor
$(3/2-g-\ii p)_n^{-1}$ in \eqref{eq:R_V_factorization} has no zeros.  Hence the zeros of
$\Psi_n(p)$ are precisely the zeros of $V_n(p)$.  By \eqref{eq:V_MP}, these are obtained from the
real zeros of the Meixner--Pollaczek polynomial $P_n(t;g-n,\pi/2)$ by the shift $t=p+\ii/2$.
Thus real zeros in $t$ correspond to zeros of $\Psi_n(p)$ lying on
$\operatorname{Im}p=-1/2$.  Simplicity follows from positive definiteness of the
Meixner--Pollaczek orthogonality for $g-n>0$.
\end{proof}

\begin{remark}
Although for each fixed $n$ the polynomial $V_n(p)$ is a Meixner--Pollaczek polynomial with shifted
argument, the sequence $V_0,V_1,\dots,V_N$ is not a single ordinary Meixner--Pollaczek family.
Indeed, the parameter $\mu=g-n$ depends on the degree.  Consequently these polynomials do not
satisfy the standard three-term recurrence relation of an ordinary orthogonal polynomial system.
The recurrence structure of the original rational factors $R_n(p)$ is instead expressed through
the generalized eigenvalue problem \eqref{eq:morse_GEVP}.
\end{remark}

\section{Conclusion}

We have shown that the momentum-space bound states of the Morse potential provide a natural
realization of a finite family of orthogonal rational functions.  After extraction of the
ground-state factor, the corresponding rational functions coincide with the symmetric
Koepf--Masjed-Jamei family.  They satisfy both a difference equation in the momentum variable and
a generalized eigenvalue problem in the degree index, thereby forming a bispectral system of
$R_{II}$ type.

The additional identification with Meixner--Pollaczek polynomials, after the poles have been
extracted, yields a simple description of the zeros of the momentum wavefunctions: all zeros are
simple and lie on a horizontal line in the complex momentum plane.

From a broader perspective, the Morse potential appears as one of the few familiar exactly
solvable quantum systems whose eigenfunctions are naturally expressed in terms of finite rational
functions rather than polynomial families.  It therefore provides a physical counterpart to the
role played by Hermite, Laguerre, Jacobi, Krawtchouk, Hahn and Racah polynomials in continuous
and discrete exactly solvable models.  This example also illustrates the relevance of generalized
eigenvalue problems in quantum mechanics and suggests that other solvable systems may give rise
to rational bispectral functions.

\section*{Acknowledgments}
The authors are grateful to Nicolas Crampé for illuminating discussions. LV is funded in part through a discovery grant of the Natural Sciences and Engineering Research Council (NSERC) of Canada. 
 AZ is supported by the Ministry of Science and Higher Education of the Russian Federation (agreement no. 075–15–2025–343).

\section*{Conflict of interest}
The authors state that there is no conflict of interest.

\section*{Data availability}
This manuscript has no associated data.

\end{document}